\newtheorem{theorem}{Theorem}[section]
\numberwithin{equation}{section}
\begin{document}

\title{   Iterative polynomial approximation algorithms for inverse graph filters} 
\author{Cheng Cheng, Qiyu Sun, Cong Zheng} 
\thanks{Cheng is with the School of Mathematics, Sun Yat-sen University, Guangzhou, 510275;
\and Sun and Zheng are with the  Department of Mathematics,
University of Central Florida,
Orlando, Florida 32816.
Emails:  chengch66@mail.sysu.edu.cn;  Cheng is with the Department of Mathematics qiyu.sun@ucf.edu. }

\maketitle

\begin{abstract}
 Chebyshev interpolation polynomials exhibit  the exponential approximation property to analytic functions on a cube. Based on
the Chebyshev interpolation polynomial approximation, we propose
 iterative polynomial approximation algorithms to implement the inverse filter with a polynomial graph filter of  commutative graph shifts in a distributed manner. The proposed algorithms exhibit exponential convergence properties, and they can be implemented on distributed networks in which agents are equipped with a data processing
subsystem  for limited data storage and computation power,
and with  a  one-hop communication subsystem   for  direct data exchange only with their adjacent agents.
Our simulations show that the proposed polynomial approximation algorithms may
converge faster
than the Chebyshev polynomial approximation algorithm
 and the conventional gradient descent algorithm 
 do. 
\end{abstract}

 {\bf Keywords:} {Graph inverse filter,  polynomial approximation algorithm, distributed implementation, graph signal processing.}

\vskip-1.5mm

\section{\bf Introduction}  
 Graphs  are widely used to model the complicated   topological structure of networks, such as (wireless) sensor networks, smart
grids and social networks \cite{bronstein17, dong20, ncjs22, Ortega18, sandryhaila14,  sandryhaila13, shuman13,     stankovic2019introduction}. Many data sets on a network can be represented by signals  ${\bf x}=(x_i)_{i\in V}$ residing on the graph ${\mathcal G} = (V, E)$, 
 where $x_i$ represents the real/complex/vector-valued data at the vertex/agent $i\in V$, a vertex in $V$ may represent an agent of the network, and an edge in $E$ between vertices could indicate that the
corresponding agents  have a
 peer-to-peer communication link between
them.  Graph signal processing paves an innovative way to extract valuable insights from datasets residing on the complicated networks,  
\cite{Cheng19, chong2003, 
  mfa07, Motee17,   
    Yick08}.

 The filtering procedure for signals on a network is a linear transformation
   \vspace{-.6em} \begin{equation}\label{filtering.def}
{\bf x}\longmapsto {\bf y}={\bf H}{\bf x},    \vspace{-.4em}\end{equation}
which maps a graph signal   ${\bf x}$
 to another graph signal ${\bf y}={\bf H}{\bf x}$, and ${\bf H}=(H(i,j))_{ i,j\in V}$ is known as a {\em graph filter}.
 An {\em graph shift}  is an elementary graph filter, and we say that a matrix ${\bf S}=(S(i,j))_{i,j\in V}$  on  the  graph ${\mathcal G}=(V, E)$ if
$S(i,j)\ne 0$ only if either $j=i$ or $(i,j)\in E$. 
In \cite{ncjs22},   the notion of multiple commutative graph shifts ${\bf S}_1, \dots,  {\bf S}_d$ are introduced,
\begin{equation}\label{commutativityS}
	{\bf S}_k{\bf S}_{k'}={\bf S}_{k'}{\bf S}_k,\  1\le k,k'\le d,
\end{equation}
and some multiple commutative graph shifts on circulant/Cayley graphs and on Cartesian product graphs  are constructed with physical interpretation.  An important property for   commutative graph shifts ${\mathbf S}_1, \ldots, {\mathbf S}_d$ is that they
can be upper-triangularized  simultaneously,
\begin{equation}
	\label{upperdiagonalization}
\widehat{\bf S}_k={\bf U}^{\rm H}{\bf S}_k{\bf U},\  1\le k\le d,
	\end{equation}
where ${\bf U}$ is a unitary matrix  
 and
	$\widehat{\bf S}_k=(\widehat S_{k}(i,j))_{1\le i, j\le N}, 1\le k\le d$, are upper triangular matrices.
As
 $\widehat{S}_k(i, i), 1\le i\le N$, are eigenvalues of ${\bf S}_k, 1\le k\le d$,
 we call the set	
 \begin{equation}\label{jointspectrum.def} \Lambda=\big\{\pmb \lambda_i=\big(\widehat{S}_1(i,i), ..., \widehat{ S}_d(i,i)\big), 1\le i\le N\big\}\end{equation}
as the {\em joint spectrum} of  ${\bf S}_1, \ldots, {\bf S}_d$  \cite{ncjs22}.
  For the case that graph shifts ${\bf S}_1, \ldots, {\bf S}_d$ are symmetric, all $\widehat S_k(i,i), 1\le i\le N, 1\le k\le d$
  are real-valued and  the joint spectrum of 
  graph shifts ${\bf S}_1, \ldots, {\bf S}_d$ is contained in some cube,
\begin{equation}\label{jointspectralcubic.def}
\Lambda\subset [{\pmb \mu}, {\pmb \nu}]:=[\mu_1,\nu_1] \times \cdots \times[\mu_d,\nu_d]\subset {\mathbb R}^d,
\end{equation}
 where  for each $1\le k\le d$,  $[\mu_k, \nu_k]$ is the (minimal) interval to contain the spectrum of the graph shift ${\bf S}_k$.
 
 A popular family of graph filters is
    {\em polynomial  filters}  of  commutative graph shifts   ${\bf S}_1, \dots,  {\bf S}_d$,
    \vspace{-0.6em}
\begin{equation}\label{MultiShiftPolynomial}
	{\bf H}=h({\bf S}_1, \ldots, {\bf S}_d)=\sum_{ l_1=0}^{L_1} \cdots \sum_{ l_d=0}^{L_d}  h_{l_1,\dots,l_d}{\bf S}_1^{l_1}\cdots {\bf S}_d^{l_d},
	\vspace{-0.6em}
\end{equation}
   where
$h$ is a multivariate polynomial
in variables $t_1,\cdots,t_d$,
\vspace{-0.6em}
$$h(t_1, \ldots, t_d)=\sum_{ l_1=0}^{L_1} \cdots \sum_{ l_d=0}^{L_d}  h_{l_1,\dots,l_d} t_1^{l_1} \ldots t_d^{l_d}$$
    \cite{Emirov19,  mario19, ncjs22, Leus17, David2019,  Lu18, segarra17, Shuman18,  Waheed18}.  The  consideration of  polynomial  filters of multiple graph
shifts  is mainly motivated by signal processing of time-varying data sets on a sensor network over a period of time, which carry
different correlation characteristics for spatial-temporal directions. 
 
  Inverse filtering procedure associated with  a polynomial filter of graph shifts
is a versatile tool, offering a wide applications across denoising,  signal reconstruction, graph semi-supervised
learning and many other applications
\cite{siheng_inpaint15, sihengTV15, cheng2021,  Emirov19, mario19,   Leus17, jiang19,  Lu18,         Onuki16, Shi15, Shuman18}. Its importance lies in its ability to recover original signals from observed data, enabling a deeper understanding of underlying network structures and dynamics. In this paper, we consider distributed implementation of inverse filtering procedure on  simple graphs   (i.e.,
 unweighted undirected graphs containing no  loops or multiple edges)
   of large order $N\ge 1$.

Given a polynomial filter ${\bf H}$ of graph shifts, one
 of the main challenges in the corresponding inverse filtering procedure
\begin{equation}\label{inverseprocedure}
{\bf y}\longmapsto {\bf x}={\bf H}^{-1}{\bf y}
\end{equation}
 is on its distributed implementation, as
  the inverse filter  ${\bf H}^{-1}$ is usually not a polynomial filter of small degree even if ${\bf H}$ is.
The first two authors of this paper proposed the following exponentially convergent quasi-Newton method
with arbitrary initial ${\bf x}^{(0)}$,
\vspace{-.6em} \begin{equation}\label{Approximationalgorithm}
 {\bf e}^{(m)}= {\bf H}{\bf x}^{(m-1)}-{\bf y}\ \ {\rm and}\  \
{\bf x}^{(m)}={\bf x}^{(m-1)}-{\bf G}{\bf e}^{(m)}, \ m\ge 1
\end{equation}
  to fulfill the inverse filtering procedure,
 where the polynomial approximation filter ${\bf G}$  to the inverse ${\bf H}^{-1}$
  is  so chosen that
  the spectral radius of ${\bf I}-{\bf G}{\bf H}$  is strictly less than $1$
 \cite{cheng2021, Emirov19,  ncjs22}. 
An important  problem  is how to select the polynomial  approximation filter ${\bf G}$ appropriately for  the fast convergence of the
 quasi-Newton method \eqref{Approximationalgorithm}.
The above  problem  has been well studied when
${\bf H}$ is  a polynomial filter  of the  
 graph Laplacian (and a single graph shift in general)
 \cite{
 sihengTV15,  Emirov19, isufi19,
 Leus17, Shi15,  Shuman18}.
For a polynomial filter ${\bf H}$
 of multiple  graph shifts,
 optimal/Chebyshev polynomial  approximation filters
are introduced in \cite{ncjs22}. 
  In Section \ref{preliminaries.section},  
we show the  exponential approximation property of  Chebyshev interpolation polynomials to analytic functions on a cube,  and then 
 we introduce the Chebyshev interpolation polynomial filter  to approximate the inverse filter ${\bf H}^{-1}$  and the  corresponding quasi-Newton method algorithm  \eqref{cipa.def}  to implement the inverse filtering procedure \eqref
 {inverseprocedure}. In Section \ref{cipa.sec}, we show the corresponding Chebyshev interpolation polynomial  algorithm is of exponential convergence and 
  can be applied  to implement
the  inverse filtering procedure \eqref{inverseprocedure} associated with a polynomial filter   on distributed networks, see  Theorem \ref{exponentialconvergence.cipa.thm} and 
Algorithms \ref{CIPA.algorithm}. 
  Numerical experiments in Section \ref{num.sec} indicate that the proposed Chebyshev interpolation polynomial approximation algorithm have  better performance  than  Chebyshev  polynomial approximation algorithm (CPA),
   the gradient descent algorithm with optimal step size (OGDA) and  the autoregressive moving-average model (ARMA) 
  do   \cite{sihengTV15, ncjs22, isufi19, Leus17, jiang19, Shi15, Shuman18, Waheed18}. 

{\bf Notation:} \  Bold lower cases and capitals are used to represent the column vectors and matrices respectively. Define    $\|{\bf x}\|_2=(\sum_{i\in V} |x_i|^2)^{1/2}$ and  $\|{\bf x}\|_\infty=\sup_{i\in V} \{|x_i|\}$ for a graph signal ${\bf x}=(x_i)_{i\in V}$
and  $\|{\bf A}\|_2=\sup_{\|{\bf x}\|_2=1} \|{\bf A}{\bf x}\|_2$ for a graph filter ${\bf A}$.
Denote  the transpose, the Hermitian and Frobenius norm of a matrix ${\bf A}$ by
${\bf A}^T$, ${\bf A}^H$  and $\|{\bf A}\|_{\rm F}$ respectively.

\section{Chebyshev interpolating polynomials} 
\label{preliminaries.section}


Let $[{\pmb \mu}, {\pmb \nu}]=[\mu_1,\nu_1] \times \cdots \times[\mu_d,\nu_d]$ be a cube in ${\mathbb R}^d$ and 
 $h$ be a multivariate polynomial that does not vanish on the cube
 $[{\pmb \mu}, {\pmb \nu}]$, i.e.,
 \begin{equation}\label{polynomial.assump}
    h({\bf t})\neq 0\  \text{ for all }\  {\bf t}\in [{\pmb \mu}, {\pmb \nu}].
\end{equation}
Write 
 ${\bf t}_{{\bf j}; {\pmb \mu}, {\pmb \nu} }=
 ({t}_{j_1; \mu_1, \nu_1},  \ldots, {t}_{j_d; \mu_d, \nu_d}), 0\le j_k\le M, 1\le k\le d$,
  be   rescaled Chebyshev points in the cube $[{\pmb \mu}, {\pmb \nu}]$,
  and
 the Lagrange basis at rescaled Chebyshev points be defined by
$$\ell_{M}({\bf t},  {\bf t}_{{\bf j}; {\pmb \mu}, {\pmb \nu} }  )=\prod_{k=1}^d
\prod_{0\le i_k\le M, i_k\ne j_k} \frac{t_k-{ t}_{i_k; \mu_k, \nu_k}}
 {{ t}_{j_k; \mu_k, \nu_k}-{ t}_{i_k; \mu_k, \nu_k}}, $$
 where  
 $${ t}_{j_k; \mu_k, \nu_k}= \frac{\nu_k+\mu_k}{2}+ \frac{\nu_k-\mu_k}{2}
  \cos \frac{(j_k+1/2)\pi}{M+1}. $$
%
For a polynomial $h$  satisfying
 \eqref{polynomial.assump}, an excellent method of approximating the reciprocal   $1/h$ on the cube $[{\pmb \mu}, {\bf \pmb \nu}]$
is the Chebyshev   interpolation polynomial
 \begin{equation}\label{chebyshevinterpolation.def}
 C_M({\bf t})=\sum_{\|{\bf j}\|_\infty \le M} \frac{1}{h( {\bf t}_{{\bf j}; {\pmb \mu}, {\pmb \nu} })}
 \ell_{M}({\bf t},  {\bf t}_{{\bf j}; {\pmb \mu}, {\pmb \nu} }  ),
 \end{equation}
 which is the unique polynomial of the form
 $\sum_{\|{\bf n}\|_\infty\le M} d_{\bf n}{\bf t}^{\bf n}$ for some ${\bf d}_{\bf n}, \|{\bf n}\|_\infty\le M$, satisfying the interpolation property
  \begin{equation}
\label{chebyshevinterpolation.def1}
C_M({\bf t}_{{\bf j}; {\pmb \mu}, {\pmb \nu} })= \frac{1}{h( {\bf t}_{{\bf j}; {\pmb \mu}, {\pmb \nu} })} \ {\rm for \ all} \ \|{\bf j}\|_\infty\le M.
\end{equation}
Recall that the  Lebesgue constant for the above polynomial interpolation at rescaled Chebyshev points
is of the order $(\ln (M+2))^d$ \cite{cheney2019}.
 This together with
the exponential convergence
of Chebyshev polynomial approximation, see \cite[Theorem 8.2]{trefethen2013} and \cite[Theorem 2.2]{xiang2012}, implies that
\vspace{-.6em}\begin{equation}\label{exponentialapproximationerror.interpolation}
\tilde b_M:=\sup_{{\bf t}\in  [{\pmb \mu}, {\pmb \nu}]}
|1-h({\bf t})C_M({\bf t})|\le D_1 r_1^M, \ M\ge 0,
\vspace{-.4em}\end{equation}
for some  positive constants
  $D_1\in (0, \infty)$ and $r_1\in (0, 1)$.

  Presented at the bottom row of Table
  \ref{MaxAppErr.tab}
      is  the  maximal approximation error $\tilde b_M, 0\le M\le 4$, of
       the Chebyshev interpolation polynomial $C_M$ on $[0, 2]$, ChebyInt for abbreviation,  to the reciprocal of the univariate function 
\vspace{-.6em}\begin{equation}\label{h1.def}
h_1(t) = (9/4-t)(3 + t), \ t\in [0, 2]
\vspace{-.4em}\end{equation}
in \cite[Eqn. (5.4)]{ncjs22}. 
We observe that the Chebyshev interpolation polynomial approximation outperforms the Jacobi polynomial approximations  with $\alpha = \beta = 1/2$ in \cite{cong2021} and the Chebyshev polynomial approximation.

\begin{table}[t]
		\renewcommand\arraystretch{1.2}
		\centering
		\caption{ Shown are the maximal approximation errors of Jacobi polynomial approximations, Chebyshev polynomial approximation and Chebyshev interpolation polynomial approximation to $1/h_1$ on $[0, 2]$ with the polynomial degree $0\le M\le 4$.
	}
       \begin{tabular} {|c|c|c|c|c|c|c|c|c|c|}			
			\hline
			\backslashbox{$(\alpha, \beta)$}
{M}& 0 & 1 & 2  & 3
& 4  
   \\
			\hline
Cheby. Poly.  &  1.0463  &   0.5837  &   0.2924    &   0.1467
&   0.0728  
\\			 \hline
			(1/2,  1/2)   &
    0.7014  &  0.5904  &    0.3897  &     0.2505
     &     0.1517   
    \\  \hline
  {ChebyInt} &
  0.7500   &  0.4497 &     0.2342  &     0.1186
  &     0.0595  
    \\ \hline
		\end{tabular}
		\label{MaxAppErr.tab}
		\vspace{-2em}
	\end{table}

\section{Chebyshev interpolation approximation algorithm}\label{cipa.sec}
Let 
 $h$ be a multivariate polynomial   satisfying
\eqref{polynomial.assump}, and
$C_M, M\ge 0$, be the Chebyshev interpolation polynomial approximation to $1/h$
in  \eqref{chebyshevinterpolation.def}.
 Set ${\bf H}=h({\bf S}_1, \ldots, {\bf S}_d)$ and 
 ${\bf C}_M=C_M({\bf S}_1, \ldots, {\bf S}_d), M\ge 0$.
By the  spectral assumption \eqref{jointspectralcubic.def}, 
the spectral radii of ${\bf I}-{\bf C}_M{\bf H}$
 are bounded by  $\tilde b_M$  in \eqref{exponentialapproximationerror.interpolation}
respectively, i.e.,
\vspace{-.6em}\begin{equation}
\rho({\bf I}-{\bf C}_M{\bf H})\leq \tilde b_M,
 \ M\ge 0.
\vspace{-.4em}\end{equation}
Therefore with appropriate selection of the polynomial degree $M$, we obtain the exponential convergence of  the following iterative polynomial approximation algorithm for inverse filtering,
\vspace{-.6em}\begin{equation} \label{cipa.def}  
\left\{\begin{array}{l}
{\bf e}^{(m)} = {\bf H} {\bf  x}^{(m-1)} -  {\bf y}\\
{\bf x}^{(m)} = {\bf x}^{(m-1)} -{\bf C}_M {\bf e}^{(m)}, \ m\ge 1
\end{array}
\right.
\vspace{-.6em}\end{equation}
with  arbitrary initials
${\bf x}^{(0)}$,
where 
 the input ${\bf y}$
   is obtained via the filtering procedure \eqref{filtering.def}.

\begin{theorem}\label{exponentialconvergence.cipa.thm}
Let ${\mathbf S}_1, \ldots, {\bf S}_d$ be commutative graph shifts satisfying
\eqref{jointspectralcubic.def},
  $h$ be a multivariate polynomial satisfying \eqref{polynomial.assump},
and let  $\tilde b_M$ be given in \eqref{exponentialapproximationerror.interpolation}.
If $ \tilde b_M<1$, 
 then for any  input ${\bf y}$, the sequence
${\bf x}^{(m)}, m\ge 0$, in the iterative algorithm \eqref{cipa.def} 
converges to  the output ${\bf H}^{-1}{\bf y}$ of the inverse filtering procedure \eqref{inverseprocedure} exponentially. In particular,
for any $r\in (\rho({\bf I}-{\bf C}_M{\bf H}), 1)$
there exists a positive constant $C$ such that
\begin{equation}
\| {\bf x}^{(m)}- {\bf H}^{-1}{\bf y}\|_2 \le C \|{\bf x}^{(0)}- {\bf H}^{-1}{\bf y}\|_2 r^m, \ m\ge 0.
\end{equation}
\end{theorem}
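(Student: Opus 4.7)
The plan is to reduce the iterative scheme to a single matrix-power recursion for the error, and then control that power using the spectral radius bound already recorded immediately before the theorem statement together with Gelfand's formula.

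First I would introduce the error vector ${\bf f}^{(m)} = {\bf x}^{(m)} - {\bf H}^{-1}{\bf y}$ and rewrite the iteration in terms of it. Substituting ${\bf y} = {\bf H}{\bf H}^{-1}{\bf y}$ into the definition of ${\bf e}^{(m)}$ gives ${\bf e}^{(m)} = {\bf H}{\bf x}^{(m-1)} - {\bf H}{\bf H}^{-1}{\bf y} = {\bf H}{\bf f}^{(m-1)}$, and hence
\begin{equation*}
{\bf f}^{(m)} = {\bf f}^{(m-1)} - {\bf C}_M {\bf H}{\bf f}^{(m-1)} = ({\bf I} - {\bf C}_M {\bf H}){\bf f}^{(m-1)}.
\end{equation*}
Iterating this identity $m$ times yields the closed form ${\bf f}^{(m)} = ({\bf I}-{\bf C}_M{\bf H})^m {\bf f}^{(0)}$, so the whole convergence question is reduced to estimating $\|({\bf I}-{\bf C}_M{\bf H})^m\|_2$.

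Next I would argue that the spectral radius $\rho({\bf I} - {\bf C}_M {\bf H})$ is at most $\tilde b_M$. Because ${\bf S}_1, \ldots, {\bf S}_d$ are commutative, the simultaneous upper-triangularization \eqref{upperdiagonalization} applies: the matrix ${\bf U}^{\rm H}({\bf I} - {\bf C}_M {\bf H}){\bf U}$ is upper triangular with diagonal entries $1 - C_M(\pmb\lambda_i)h(\pmb\lambda_i)$ for $\pmb\lambda_i \in \Lambda$. By \eqref{jointspectralcubic.def} each $\pmb\lambda_i$ lies in $[\pmb\mu,\pmb\nu]$, so the approximation estimate \eqref{exponentialapproximationerror.interpolation} yields $|1 - C_M(\pmb\lambda_i)h(\pmb\lambda_i)| \le \tilde b_M$, proving $\rho({\bf I}-{\bf C}_M{\bf H}) \le \tilde b_M < 1$.

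Finally I would invoke Gelfand's spectral radius formula $\lim_{m\to\infty}\|{\bf A}^m\|_2^{1/m} = \rho({\bf A})$: for every $r > \rho({\bf I}-{\bf C}_M{\bf H})$ there is a constant $C>0$ with $\|({\bf I}-{\bf C}_M{\bf H})^m\|_2 \le C r^m$ for all $m\ge 0$. Combining with the closed form for ${\bf f}^{(m)}$ and taking $r \in (\rho({\bf I}-{\bf C}_M{\bf H}),1)$ gives
\begin{equation*}
\|{\bf x}^{(m)} - {\bf H}^{-1}{\bf y}\|_2 \le C\|{\bf x}^{(0)} - {\bf H}^{-1}{\bf y}\|_2\, r^m,
\end{equation*}
which is exactly the claimed estimate. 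The only mildly delicate point is step two, where one must justify the spectral-radius bound rather than a full operator-norm bound; all of the harder analytic content has already been absorbed into \eqref{exponentialapproximationerror.interpolation}, and passage from spectral radius to exponential decay of the powers is standard linear algebra via Gelfand.
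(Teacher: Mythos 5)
Your proof is correct and follows essentially the same route as the paper: reduce to the error recursion ${\bf f}^{(m)}=({\bf I}-{\bf C}_M{\bf H})^m{\bf f}^{(0)}$, bound $\rho({\bf I}-{\bf C}_M{\bf H})\le\tilde b_M<1$ via the joint spectrum lying in $[{\pmb\mu},{\pmb\nu}]$, and conclude with Gelfand's formula. If anything, you supply more detail than the paper does on the spectral-radius step (via the simultaneous upper-triangularization), and your clean closed form for the error avoids the extraneous series term that appears in the paper's displayed identity \eqref{xm.nondef}.
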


\begin{proof}  
As ${\mathbf S}_1, \ldots, {\bf S}_d$ are commutative graph shifts, $1/h$ is analytic on the joint spectrum  $[{\pmb \mu}, {\pmb \nu}]$, we have  
\begin{eqnarray}\label{xm.nondef}{\bf x}^{(m)}-{\bf H}^{-1}{\bf y}&=&({\bf I}-{\bf C}_M{\bf H})^m({\bf x}^{(0)}-{\bf H}^{-1}{\bf y}), \nonumber\\
&-& \sum_{n=m+1}^{\infty} ({\bf I}-{\bf C}_M{\bf H})^n{\bf C}_M{\bf y})\end{eqnarray}
from the iterative algorithm \eqref{cipa.def}. 
By the  spectral assumption \eqref{jointspectralcubic.def} on commutative graph shifts ${\bf S}_1, \ldots, {\bf S}_d$, 
the spectral radii of ${\bf I}-{\bf C}_M{\bf H}$ 
 is bounded by $\tilde{b}_M$ in \eqref{exponentialapproximationerror.interpolation}, i.e.,
\begin{equation}
\rho({\bf I}-{\bf C}_M{\bf H})\leq {\tilde b}_M. 
\end{equation} 
 By Gelfand's formula on spectral radius, there exists a positive constant $C$ for any $r\in (\rho({\bf I}-{\bf C}_M{\bf H}), 1)$ such that
\begin{equation}  \label{xm.def22}
\|({\bf I}-{\bf C}_M{\bf H})^m\|_2\le C r^m,\  n\ge 1.
\end{equation}
From  \eqref{xm.nondef} and \eqref{xm.def22},  it follows that
\begin{eqnarray*} \label{xm.nondef2}
  \|{\bf x}^{(m)}-{\bf H}^{-1}{\bf y}\|_2&\le& \|({\bf I}-{\bf C}_M{\bf H})^m\|_2 \|{\bf x}^{(0)}-{\bf H}^{-1}{\bf y}\|_2 \nonumber\\
	& \le& C r^m   \|{\bf x}^{(0)}-{\bf H}^{-1}{\bf y}\|_2, m\ge 0. 
\vspace{-0.4em}
\end{eqnarray*}
This proves the exponential convergence of ${\bf x}^{(m)}, m\ge 0$.  
\end{proof}

 We call the iterative polynomial approximation
  algorithm \eqref{cipa.def}  
  as  {\em  Chebyshev interpolation polynomial approximation algorithm}, CIPA for abbreviation. We remark that in 
each iteration in  CIPA  contains essentially two filtering procedures associated
with polynomial filters  ${\bf C}_M$  and ${\bf H}$,
and hence it can be implemented at the  vertex level with one-hop communication,  see
Algorithm \ref{CIPA.algorithm}. 
Therefore the CIPA algorithms
   can be implemented  on a distributed network with each agent equipped with limited storage and data processing ability,  and one-hop communication subsystem.
  More importantly,  the memory, computational cost and communication expense for
each agent of the network are {\bf independent} on the size of the whole network.

  \begin{algorithm}[t]
\caption{The CIPA algorithm  to implement the inverse filtering procedure ${\bf y}\longmapsto {\bf H}^{-1}{\bf y}$
  at a vertex $i\in V$. }
\label{CIPA.algorithm}
\begin{algorithmic}  

\STATE {\bf Inputs}: Polynomial coefficients of polynomial filters ${\bf H}$ and ${\bf C}_M$,  entries $S_k(i,j), j\in {\mathcal N}_i$ in the $i$-th row of the shifts ${\bf S}_k, 1\le k\le d$,
the  value $y(i)$  of the input signal ${\bf y}=(y(i))_{i\in V}$ at the vertex $i$, and number $m$ of iteration.



\STATE {\bf Initialization}:  Initial $e^{(0)}(i)=y(i)$, $x^{(0)}(i)=0$ and $n=0$.

\STATE{\bf Iteration}:  Use the iteration in  \cite[Algorithm 4] {ncjs22}  except replacing  ${\widetilde {\bf G}}_L$  by
 ${\bf C}_M$, and the output is
$ x^{(n)}(i)$.

\STATE {\bf Output}: The approximated  value $ x(i)\approx x^{(m)}(i)$  is  the output signal ${\bf H}^{-1}{\bf y}=( x(i))_{i\in V}$ at the vertex $i$.
\end{algorithmic}  
\end{algorithm}

%
\section{Numerical Experiments}\label{num.sec}

Circulant graphs are widely used in image processing
\cite{ekambaram13, vnekambaram13, ncjs22, dragotti19, dragotti19a}.
 Our numerical results  show that
  the CIPA  
 have impressive performances to implement the inverse filtering procedure than
the Chebyshev polynomial approximation algorithm  in \cite{ncjs22} and
the gradient descent method  in \cite{Shi15} do.  Some Tikhonov regularization problem
can be converted to an inverse filtering procedure  \cite{ncjs22, Grassi2018}, we  also demonstrate the denoising performance of the polynomial approximation algorithms
 to the walking dog dataset.

 \vspace{-1mm}
\subsection{Polynomial approximation algorithms}
\label{circulantgraph.demo}
 \vspace{-1mm}
	Let $N\ge 1$ and we say that
 $a=b\ {\rm mod }\ N$ if $(a-b)/N$ is an integer.
 The  {\em circulant graph} ${\mathcal C}(N, Q)$  generated by  $Q=\{q_1, \ldots, q_L\}$
is a simple graph with  the vertex set   $V_N=\{0, 1, \ldots, N-1\}$  and the edge set
$E_N(Q)=\{(i, i\pm q\ {\rm mod}\ N),\  i\in V_N, q\in Q\}$,  
where   $q_l, 1\le l\le L$, are integers contained in $[1, N/2)$.
 Let  $Q_0 = \{1, 2, 5\}$ and 
 the polynomial filters  be  ${\bf H}_1=h_1( {\bf L}_{\mathcal{C}(N,Q_0)}^{\rm sym})$, 
   the input signal ${\bf x}$ have i.i.d. entries randomly selected in $[-1, 1]$,
    and  the  input signal  ${\bf y}={\bf H}_1{\bf x}$, 
 where $h_1(t)=(9/4-t)(3+t)$ in \eqref{h1.def}, and ${\bf L}_{\mathcal{C}(N,Q_0)}^{\rm sym}$ is the symmetric normalized Laplacian  on the circulant graph  $\mathcal{C}(N,Q_0)$.  Shown in Table
\ref{CirculantGraphCPA.Table} are
averages of the relative  iteration error
 \vspace{-.3em}\begin{equation*} \vspace{-.4em}{\rm E}(m)=\frac{ \|{\bf x}^{(m)}-{\bf x}\|_2}{\|{\bf x}\|_2},\  m\ge 1, \end{equation*} 
over 1000 trials to implement the inverse filtering procedure ${\bf y}\longmapsto {\bf H}_1^{-1} {\bf y}$ via CPA ( the Chebyshev
polynomial approximation algorithm in \cite{ncjs22}), 
 the  JPA($1/2, 1/2$) (Jacobi  polynomial approximation with appropriate selection of parameters $\alpha=1/2$ and $\beta=1/2$ in \cite{cong2021}), CIPA with zero initial  ${\bf x}^{(0)}={\bf 0}$, the  gradient descent method  with optimal step size in \cite{Shi15} and  autoregressive moving
average method in  \cite{Leus17},
OGDA and ARMA for abbreviation, 
where ${\bf x}^{(m)}, m\ge 1$, are the output of the polynomial approximation algorithm
 at $m$-th iteration and   $M$ is  the degree of polynomials the polynomial approximation. 
We observe 
that
CIPA
 have the  {\bf best} performances on the implementation of inverse filtering procedure
than the JPA($1/2,1/2$) in \cite{cong2021}, 
 CPA in \cite{ncjs22} does,
and
CIPA  
has much better performance  than  than
the  gradient descent method  does.   

%

\begin{table}[t]
		\centering
		\caption{ 		Average   relative iteration errors  $E(m)$
to implement the inverse filtering  ${\bf y}\longmapsto {\bf H}_1^{-1} {\bf y}$ on the circulant graph ${\mathcal C}(1000, Q_0)$ via polynomial approximation algorithms with polynomial degree $M=1$,
 the gradient descent algorithm with optimal step size and ARMA, where we take zero as the initial.}
		\label{CirculantGraphCPA.Table}			
       \begin{tabular} {|l|c|c|c|c|c|c|c|}	
   			\hline
   \hline
			\backslashbox{Alg.} 
{Iter. $m$} 
& 1 & 2 & 3 & 4 & 5 \\ 
  \hline
CPA&
0.4494   & 0.2191 &   0.1103  &  0.0566   & 0.0295 \\
\hline
 JPA(${1}/{2}$, ${1}/{2}$)
	&  0.2056  &  0.0769 &   0.0390 &   0.0213 &   0.0119\\
\hline
CIPA
&0.2994 &    0.1010  &   0.0349   &  0.0122  &   0.0043  \\
\hline
OGDA & 0.2350 & 0.0856 & 0.0349 & 0.0147 & 0.0063\\
 \hline
 ARMA & 0.3259&    0.2583 &  0.1423 &   0.1098   & 0.0718\\
   \hline
\end{tabular}
 \vspace{-3em}\end{table}

\subsection{Denoising dancing dog dataset}
 
In the second simulation, we consider
 applying polynomial approximation algorithms to denoise the walking dog dataset ${\bf W}$ of size $ 442854=2502 \times 59 \times 3$ \cite{Grassi2018}.
Let ${\mathcal T}$ be the line graph  with 59 vertices and
 ${\mathcal W}=(V, E)$ be the undirected graph with  $2502$ vertices and
 edges  constructed by the  5 nearest neighboring algorithm.
 The walking dog data  is modelled as a time-varying signal ${\bf W}(t, i)\in {\mathbb R}^3,  t\in \{1, \ldots, 59\}, i\in V$
on the product graph ${\mathcal T}\times {\mathcal W}$.
Consider the scenario that the known dataset is the noisy walking dog dataset
 \vspace{-.6em}\begin{equation}\label{walkingdog.model}
\widetilde {\bf W}={\bf W}+\lambda {\pmb \eta} \vspace{-.4em}\end{equation}
 corrupted
by some random noises $\lambda {\pmb \eta}$, where ${\pmb \eta}$ has its
 components ${\pmb \eta}(t, i), t\in \{1, \ldots, 59\},  i\in V$
being independently and randomly selected with a normal Gaussian distribution,
and the normalization factor
$\lambda= {0.2 \|{\bf W}\|_F} ({{\mathbb E}\|{\pmb \eta}\|_F^2})^{-1/2}=29.1398$ is so chosen that the Frobenius norm  $\lambda \|{\pmb \eta}\|_F$ of the additive noise is about 20\% of the norm $\|{\bf W}\|_F$ of
 the walking dog dataset.  

A conventional denoising procedure is  the  Tikhonov denoising approach in each coordinate dimension (\cite{ncjs22, Grassi2018}).
Then the denoised walking dog dataset is given by
	 \vspace{-.3em}\begin{equation}\label{Dogminimization0}
	\widehat {\bf w} := {\rm arg}\min_{\bf z}  \|{\bf z}-{\widetilde {\bf w}}\|_2^2+\gamma_1 {\bf z}^T  {\bf S}_1 {\bf z} + \gamma_2 {\bf  z}^T  {\bf S}_2 {\bf z},
	 \vspace{-.3em}\end{equation}
 where $\widetilde {\bf w}$ is  the vectorization of the  noisy dog dataset $\widetilde {\bf W}$,  $ {\bf S}_1={\bf I}\otimes {\bf L}^{\rm sym}_{\mathcal{W}}$, $ {\bf S}_2={\bf L}_{\mathcal T}^{\rm sym}\otimes {\bf I}$,  ${\bf L}^{\rm sym}_{\mathcal W}$ and ${\bf L}_{\mathcal T}^{\rm sym}$ are symmetrically normalized Laplacian matrices on the graph ${\mathcal W}$ and ${\mathcal T}$ respectively,
 and penalty constants $ \gamma_1, \gamma_2\ge 0$ are used to balance the fidelity and regularization in the vertex and temporal domains. 
 Shown in the top row of Figure \ref{denoisecompdog1.fig} are the snapshots of  the original walking dog dataset, the noisy  walking dog dataset
  and the denoised walking dog dataset at time $t=1$, where the penalty constants $\gamma_1=\gamma_2=1$ are used.   We observe
  that the proposed iterative polynomial approximation algorithms can effectively denoise the walking dog dataset.
 From the top right and top middle plots of Figure \ref{denoisecompdog1.fig}, we see that
 the denoised dog dataset reveals the shape of the dog, while
the noisy walking dog dataset obscures the gesture of the two front legs.

 	\begin{figure}[t] 
\centering
\includegraphics[width=28mm, height=24mm]
{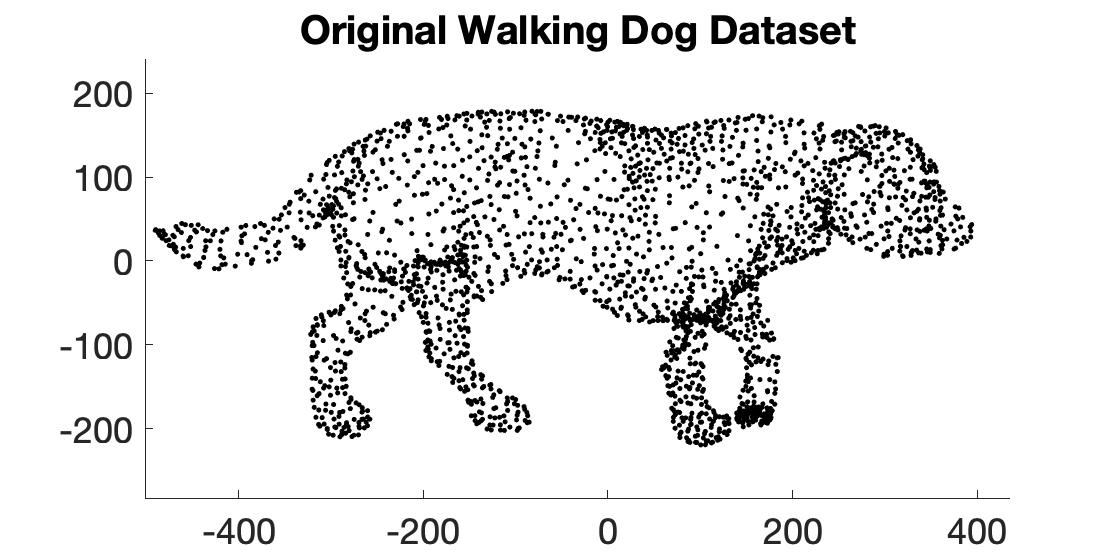}
\includegraphics[width=28mm, height=24mm]
{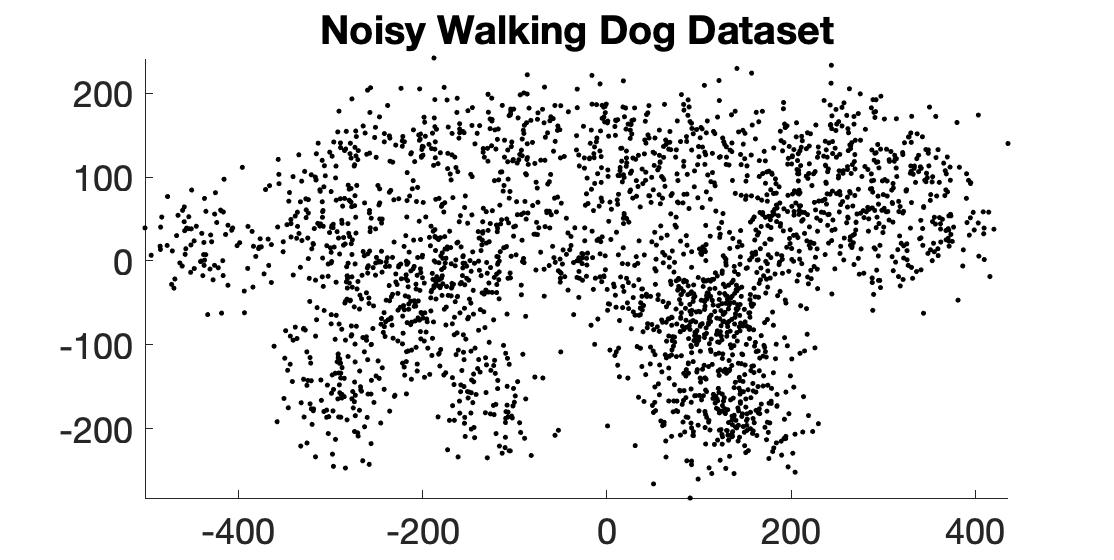}
\includegraphics[width=28mm, height=24mm]
{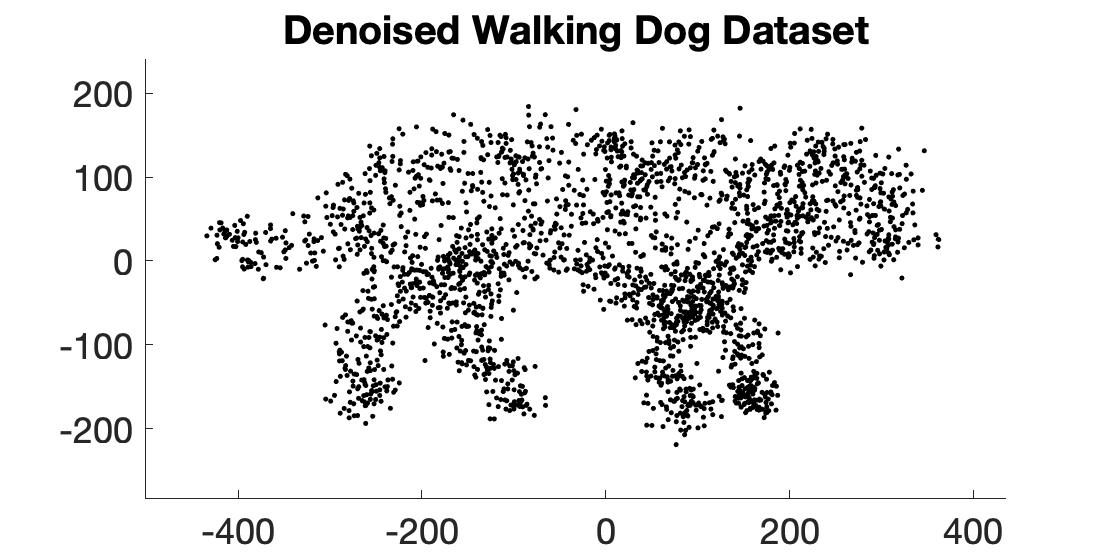}\\
\includegraphics[width=28mm, height=24mm]
{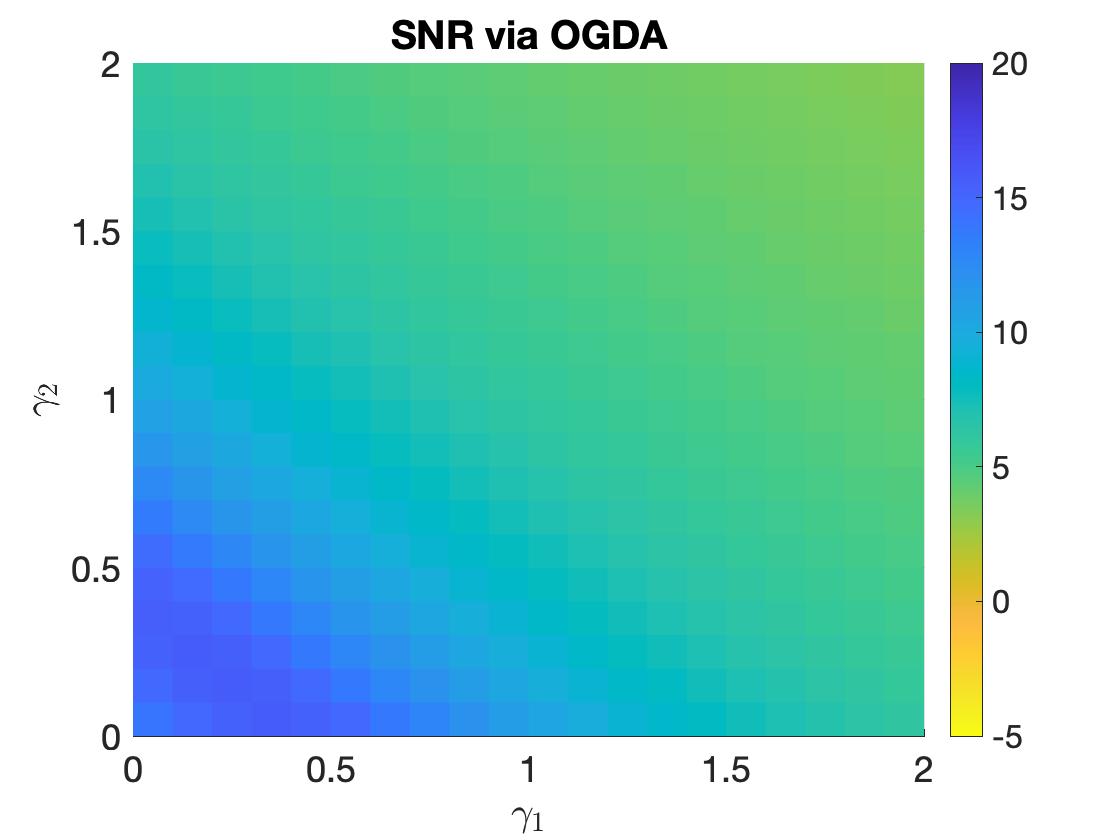}
\includegraphics[width=28mm, height=24mm]
{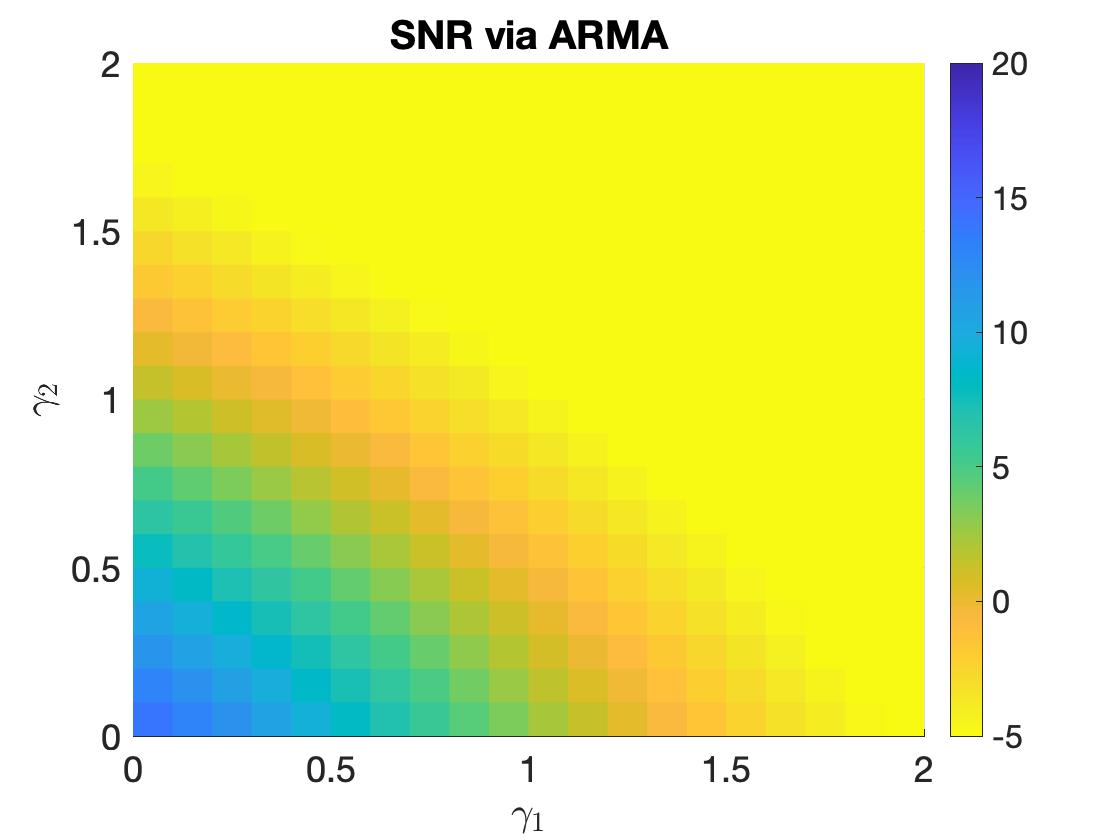}
\includegraphics[width=28mm, height=24mm]
{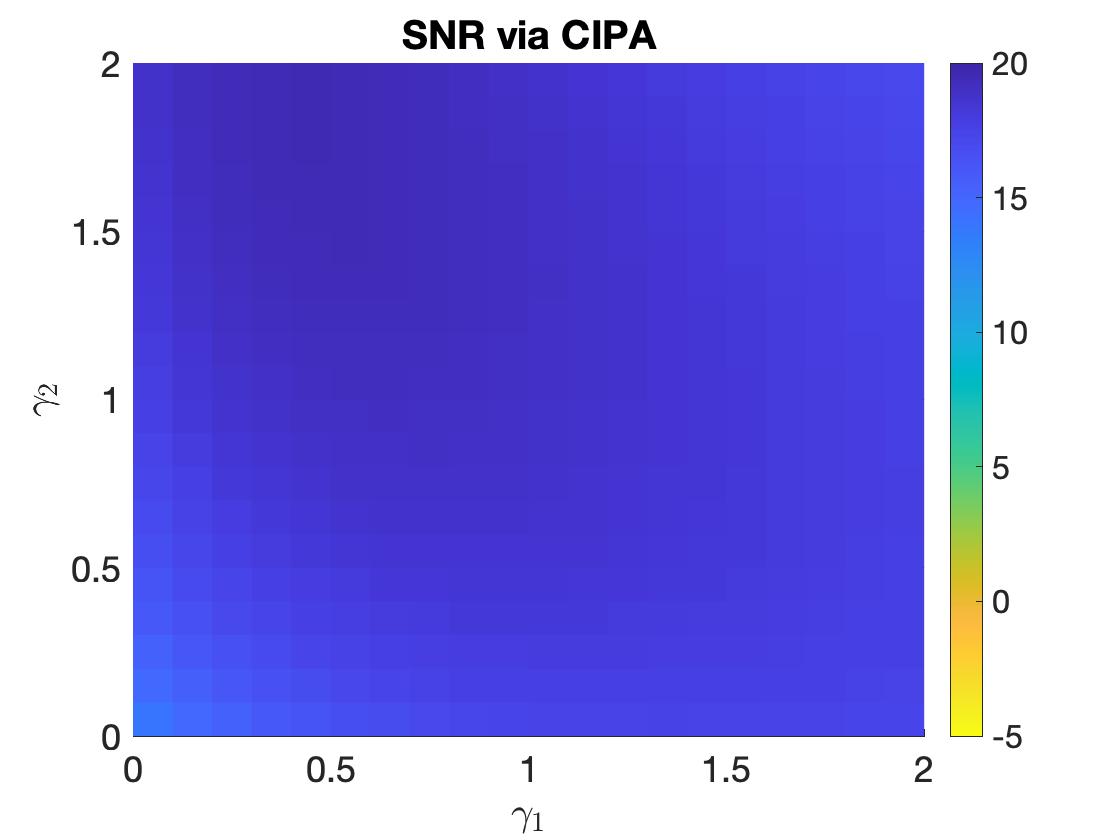} 
\caption{Plotted on the top left, top middle and top right are  the snapshots of  the original walking dog  dataset,
 the noisy dataset and the denoised dataset at the time $t=1$, where  the input SNR is  $13.9734$ and the output SNR is $18.9654$. Plotted on the bottom  from the left to the right are average output  signal-to-noise ratios $\max({\rm SNR}(m), -5)$ at $m$-th iteration of OGDA, ARMA and CIPA
on denoising the walking dog dataset through Tikhonov denoising approach in \eqref{Dogminimization0} with respect to different penalty constants $\gamma_1, \gamma_2\in [0, 2]$, where $m=3$.
}
\label{denoisecompdog1.fig}
 \vspace{-2em}\end{figure}

Denote the output of the $m$-th iteration of the proposed algorithms by
$\widehat{\bf w}^{(m)}$ and define
the output signal-to-noise ratio of the proposed algorithms at $m$-th iteration by
	\vspace{-0.5em}$${\rm SNR}(m)=-20 \log_{10} \frac{\|\widehat{\bf w}^{(m)}-{\bf w}\|_2}{\|{\bf w}\|_2}, \ m\ge 1.$$
Comparing with CIPA,  OGDA  in \cite{Shi15} has a slower convergence rate, see the bottom left plot of Figure \ref{denoisecompdog1.fig}. Our experiments also indicate that the OGDA may achieve a similar denoising performance after 20 iterations  to that CIPA do after 3 iterations.
Plotted in the  bottom middle  of Figure \ref{denoisecompdog1.fig} are
the average output signal-to-noise ratios
$\max({\rm SNR}(3), -5)$ of
the ARMA model proposed in \cite{Leus17}
with respect to the penalty constants $\gamma_1, \gamma_2\in [0, 2]$
over 20 trials on the random noise ${\pmb \eta}$ in \eqref{walkingdog.model}.
It shows that the ARMA model  may fail to denoise the walking dog dataset for  penalty constants $\gamma_1, \gamma_2$ not close to zero.
Recall that the requirement for convergence of the ARMA model
 in \cite{Leus17} is that the spectrum of
 $ {\bf T}:=\gamma_1  {\mathbf S}_1+\gamma_2 {\mathbf S}_2$
 is contained in $(-1, 1)$.
Observe that the spectrum of
  $ {\bf T}$
  is contained in $[0, 2(\gamma_1+\gamma_2)]$
from the spectral properties of
${\mathbf S}_1$ and ${\mathbf S}_2$.
Then a possible explanation for why the ARMA model did not perform well is that it does not meet the requirement
for the convergence of the ARMA method
when $\gamma_1, \gamma_2$ are  not close to zero.

{\bf Acknowledgement}: This work is partially supported by National Key RD Program of China (No. 2024YFA1013703),  
National Nature Science Foundation
of China (12171490), Guangdong  Basic and Applied Basic Research
Foundation (2022A1515011060), and Fundamental Research Funds for the Central Universities, Sun Yat-sen University (24lgqb019).

\bibliographystyle{ieeetr}

\end{document}